\begin{document}
\title{Convergence of Decentralized Actor-Critic Algorithm in General-sum Markov Games}
\date{}
\author{Chinmay Maheshwari
\thanks{This version: November 2024. 
Corresponding author: chinmay\_maheshwari@berkeley.edu
} \thanks{Chinmay Maheshwari and Shankar Sastry are affiliated with Electrical Engineering and Computer Sciences (EECS) at UC Berkeley, Berkeley, CA, USA.}, Manxi Wu\thanks{Manxi Wu is affiliated with Operations Research and Information Engineering (ORIE) at Cornell University, Ithaca, NY, USA.}, and Shankar Sastry
\thanks{This work is supported by NSF Collaborative Research: Transferable, Hierarchical, Expressive, Optimal, Robust, Interpretable NETworks (THEORINET) under award No. DMS-2031899.}
}

\maketitle

\begin{abstract}
Markov games provide a powerful framework for modeling strategic multi-agent interactions in dynamic environments. 
Traditionally, convergence properties of decentralized learning algorithms in these settings have been established only for special cases, such as Markov zero-sum and potential games, which do not fully capture real-world interactions. 
In this paper, we address this gap by studying the asymptotic properties of learning algorithms in general-sum Markov games. In particular, we focus on a {decentralized algorithm where each agent adopts an actor-critic learning dynamic with asynchronous step sizes}. 
This decentralized approach enables agents to operate independently, without requiring knowledge of others' strategies or payoffs. 
We introduce the concept of a Markov Near-Potential Function (MNPF) and demonstrate that it serves as an approximate Lyapunov function for the policy updates in the decentralized learning dynamics, which allows us to {characterize the convergent set of strategies}. We further strengthen our result under specific regularity conditions and with finite Nash equilibria.
\end{abstract}

\newpage 
\section{Introduction}
The Markov game framework is a well-established model for capturing the coupled interactions of agents with heterogeneous utilities in dynamic and uncertain environments. This framework has been widely applied to various large-scale societal applications, including autonomous driving \cite{shalev2016safe}, smart grids \cite{ma2013scalable}, and e-commerce \cite{kutschinski2003learning}, among others. In such environments, agents must learn to adapt their behavior in the presence of other agents. Due to privacy and scalability concerns, these agents typically operate in a decentralized manner, meaning they act independently without direct communication or coordination with other agents and rely solely on local feedback they obtain from the environment. Furthermore, agents might not be aware of the existence of other agents in the environment. 

Existing literature has focused on design and analysis of decentralized learning algorithms in various settings that include the competitive setting represented as the two player zero-sum games (see \cite{sayin2021decentralized, daskalakis2020independent, wei2021last} and references therein), the cooperative setting represented by the Markov team games (see \cite{wheeler1986decentralized, arslan2016decentralized, yongacoglu2021decentralized} and references therein), and their generalizations to Markov potential games (see   \cite{maheshwari2022independent, fox2022independent} and references therein) or weakly acyclic games (see \cite{yongacoglu2023asynchronous} and references therein). However, these studies fail to capture the complexity of large-scale multi-agent interactions in the real world, which often involves a mixture of both cooperative and competitive dynamics. 
Recent research has explored decentralized learning in general-sum Markov games (e.g., \cite{mao2023provably, jin2021v, ma2023decentralized}), but are only concerned with convergence to weaker equilibrium concepts, such as the correlated equilibrium and coarse-correlated equilibrium, rather than reaching a Nash equilibrium.

This work investigates the convergence properties of decentralized actor-critic learning dynamics (proposed and studied in \cite{maheshwari2022independent} for Markov potential games) in general-sum Markov games, with a focus on evaluating the proximity of resulting policies to Nash equilibrium. It is well-known that decentralized learning dynamics often do not converge to Nash equilibria, even in static games \cite{hart2003uncoupled}. Therefore, our objective is to define the neighborhood around Nash equilibria where policies driven by decentralized actor-critic learning dynamics will persist in the long run in general-sum Markov games.

Our analysis builds on a new Markov near-potential function (MNPF) framework. In this framework, given any Markov game with finite state and action space, we can construct an MNPF such that the rate of the change in MNPF with respect to an agent's policy deviation approximates the rate of change in the agent's own value function (see Definition \ref{def: eps MPG new}), and the difference between the two rates is captured by the \emph{closeness parameter}. In the special case of the Markov potential game, which is heavily studied in literature, the MNPF becomes the exact Markov potential function, and the closeness parameter becomes zero.

Our idea of MNPF builds on the notion of near potential game in static games (introduced in \cite{candogan2013near, candogan2013dynamics}). In static games, a near potential function approximates the \emph{absolute} change of an agent's utility with their own strategy. The paper \cite{candogan2013dynamics} analyzed the convergence of fictitious play, showing that the convergent set of strategies is related to the closeness parameter of the near potential function.  This idea was extended to Markov games as the Markov $\alpha$-potential function
in \cite{guo2024alpha,guo2023markov, das2024learning}. {Our MNPF generalizes this by approximating the rate of change in agents' value functions with their policies, rather than absolute value changes ({Remark \ref{rem: ConnectMAPF}}). This generalization allows us to approximate the gradient of the value function (Lemma \ref{lem: GradDiff}), which is essential for characterizing the convergence set of the decentralized actor-critic algorithm (Remark \ref{rem: AdvantageMNPF})}. 
We show that MNPFs always exist for Markov games with finite state and action sets (Proposition \ref{prop:AllMNPG_Eq_Approximate}).


Our convergence result leverages the timescale separation in decentralized actor-critic dynamics, where agents update their local estimate of the Q-function on a faster timescale and their policies on a slower one, using only bandit feedback on state transitions and their own reward. Using two-timescale stochastic approximation theory \cite{perkins2013asynchronous}, we show that the fast updates converge to the Q-function's value while treating the slow policy updates as static. The agents' policy trajectories can then be analyzed as a continuous-time dynamical system, with the MNPF acting as an approximate Lyapunov function. We prove that these trajectories converge to a level set of the MNPF, which can be viewed as a set of approximate Nash equilibria (Theorem \ref{theorem:independent}). When the MNPF is Lipschitz continuous and the set of Nash equilibria is finite, the dynamics converge to the neighborhood of a single equilibrium (Theorem \ref{thm: ConvergenceFiniteEquilibrium}). In both theorems, the convergent set is characterized by the closeness parameter of the MNPG.   We evaluate the effectiveness of our results through a numerical experiment in Section \ref{sec: Numerics}. 

A schematic summarizing our approach is presented in Figure \ref{fig:my_diagram}. 

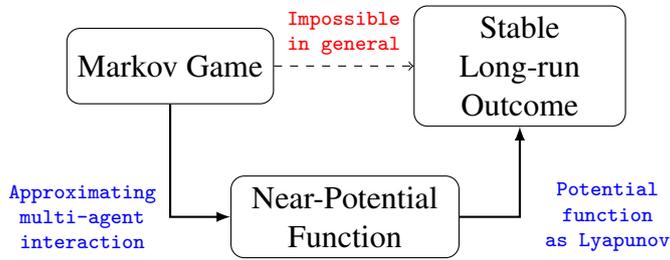
\begin{figure}[htbp] 
\centering
\begin{tikzpicture}

\node[draw, rectangle, minimum width=2cm, minimum height=1cm, rounded corners=5pt] (box1) at (0, 0) {Markov Game};
\node[draw, rectangle, minimum width=2cm, minimum height=1cm, rounded corners=5pt] (box2) at (4.6, 0) {\parbox{2.5cm}{\centering Stable Long-run\\Outcome}};
\node[draw, rectangle, minimum width=3cm, minimum height=1cm, align=center, rounded corners=5pt] (box3) at (2.3, -2) {\parbox{2.5cm}{\centering Near-Potential\\ Function}};

\draw[dashed, ->] (box1) -- node[above] {\color{red}\parbox{2.5cm}{\centering\scriptsize \texttt{Impossible} \\\texttt{in general}}} (box2);
\draw[thick, ->, >=latex] (box1) |- node[left] {\color{blue}\parbox{2cm}{\centering\scriptsize \texttt{Approximating} \\\texttt{multi-agent}\\\texttt{interaction}}} (box3);
\draw[thick, ->, >=latex] (box3) -| node[right] {\color{blue}\parbox{2cm}{\centering\scriptsize \texttt{Potential function} \\\texttt{as Lyapunov}}} (box2);

\end{tikzpicture}
\caption{Schematic of the our approach.}
    \label{fig:my_diagram}
\end{figure}

\textbf{Notations.} We denote the set of all probability distributions over a set \(X\) by \(\Delta(X)\). 
 For any function \(f:X\times Y\rightarrow \mathbb{R}\) we define \(f(x,p) = \mathbb{E}_{y\sim q}[f(x,y)]\), where \(p\in\Delta(X)\) and \(q\in \Delta(Y)\).  We use the notation \(\mathbf{1}_X\) to denote a vector of dimension \(|X|\) with all entries to be one. {We use \(\times_{i\in [N]}X_i\) to denote \(X_1\times X_2\times ... \times X_N.\) Unless otherwise stated, we use \(\|\cdot\|\) to mean $l_2$-norm.}

\section{Setup}
A (\emph{general-sum}) \emph{Markov game} \(\game\) is given by the tuple \(\langle \playerSet, \stateSet, (\actSet_i)_{i\in\playerSet}, (\stagePayoff_i)_{i\in\playerSet}, \transition, \discount\rangle\), where $\playerSet$ is a finite set of players (where \(|\playerSet|= N\)); \(\stateSet\) is a finite set of states; \(\actSet_i\) is a finite
   set of actions for each player \(i \in \playerSet\), with joint action profile \(a=(a_i)_{i \in I}  \in \actSet = \times_{i\in \playerSet}
   \actSet_i\); \(\stagePayoff_i: S \times A \to \mathbb{R}\) is the one-stage
   payoff function of player $i$ 
   and \(r_{\max} := \max_{i\in I, s\in S, a\in A}|r_i(s,a)|\); $\transition = (\transition(s'|s, a))_{s, s' \in S, a \in A}$ is the
   state transition matrix and $\transition(s'|s, a)$ is the probability that state changes from $s$ to $s'$ with action profile $a$; and \(\discount\in [0,1)\) is the discount factor.
 For each player \(i\), a \emph{stationary Markov policy} \(\pi_i: S \to \Delta(A_i)\) specifies the probability \(\pi_i(s, a_i)\) of choosing action \(a_i\) in state \(s\). The set of all stationary policies for player \(i\) is \(\Pi_i := \Delta(A_i)^{|S|}\), and the \emph{joint policy profile} of all players is \(\pi := (\pi_i)_{i\in\playerSet} \in \Pi := \times_{i \in I} \Pi_i\). Similarly, \(\pi_{-i} := (\pi_j)_{j \in\playerSet\setminus \{i\}}\) is the joint policy of all players except \(i\).
 
The game proceeds in discrete-time stages indexed by \(k \in \{0, 1, \dots\}\). At \(k=0\), the initial state \(s^0\) is sampled from a distribution \(\mu \in \Delta(S)\). At each time step \(k\), given state \(s^k \in S\), each player samples \(a^k_i \sim \pi_i(s^k)\), forming the joint action profile \(a^k := (a^k_i)_{i \in I}\). The next state is \(s^{k+1} \sim P(\cdot|s^k, a^k)\).
For an initial distribution \(\mu\) and a stationary policy profile \(\pi\), the expected total discounted payoff for each player \(i \in I\) is
\(V_i(\mu, \pi) = \mathbb{E}\left[\sum_{k=0}^{\infty} \gamma^k r_i(s^k, a^k)\right],\)
where \(s^0 \sim \mu\), \(a^k \sim \pi(s^k)\), and \(s^{k+1} \sim P(\cdot|s^k, a^k)\).
We define the discounted state occupancy measure as \(
    d^{\pi}_{\mu}(s) := (1-\gamma) \sum_{s^0 \in S} \mu(s^0) \sum_{k=0}^{\infty} \gamma^k \Pr(s^k = s | s^0),
\)
with \(\sum_{s \in S} d^{\pi}_{\mu}(s) = 1\).
    \begin{definition}[Stationary Nash equilibrium]\label{def: NashEqEpsNashEq} 
     For any $\epsilon \geq 0$, a policy profile \(\policyNash\in \Pi\) is an \emph{$\epsilon$-stationary Nash equilibrium} of $\game$ if for any \(i\in \playerSet\), any \(\policy_i\in\Pi_i\), 
     \(
        \vFunc_i(\mu, \policyNash_i,\policyNash_{-i}) \geq \vFunc_i(\mu, \policy_i,\policyNash_{-i}) -\epsilon.
    \)
    Any \(\epsilon\)-Nash equilibrium with \(\epsilon=0\) is a \emph{Nash equilibrium}. For any \(\epsilon\geq 0,\) we use the notation \(\textsf{NE}(\epsilon)\) to denote the set of all \(\epsilon\)-stationary Nash equilibria. 
    \end{definition}

\section{Markov Near-potential Function}
We introduce the notion of Markov near potential function which is crucial for subsequent disposition. 
\begin{definition}[Markov near-potential function ]\label{def: eps MPG new}
A bounded function \(\Phi: S \times \Pi \rightarrow \mathbb{R}\) is called a \emph{Markov near potential function} (MNPF) for a game \(\game\) with \emph{closeness parameter} \(\kappa \geq 0\), if for all \(s \in S\), \(i \in I\), \(\pi_i, \pi_i^{\prime} \in \Pi_i\), and \(\pi_{-i} \in \Pi_{-i}\), 
\begin{equation}\label{eqn: def eps MPG new}
\begin{aligned}
    &|\left(\Phi\left(s, \pi_i^{\prime}, \pi_{-i}\right)-\Phi\left(s, \pi_i, \pi_{-i}\right)\right) - \left(V_i\left(s, \pi_i^{\prime}, \pi_{-i}\right)-V_i\left(s, \pi_i, \pi_{-i}\right)\right)| \leq \nearParameter \|\pi_i'- \pi_i\|,
\end{aligned}
\end{equation} 
where the \(\|\pi_i'- \pi_i\|:= \sqrt{\sum_{s\in S}\sum_{a_i\in A_i}(\pi_i'(s, a_i)-\pi_i(s, a_i))^2}\).
\end{definition}
Definition \ref{def: eps MPG new} ensures that the difference between the rate of change in value function of any player with respect to the unilateral change in their policy and that of the potential function is upper bounded by $\kappa$.

\begin{remark}\label{rem: ConnectMAPF}{Our MNPF framework generalizes the recently introduced Markov \(\alpha\)-potential function framework from \cite{guo2023markov, guo2024alpha}. Specifically, if \({\Phi}\) is a Markov \(\alpha\)-potential function for a game \(G\), then for all \(s \in S\), \(i \in \playerSet\), \(\pi_i, \pi_i^{\prime} \in \Pi_i\), and \(\pi_{-i} \in \Pi_{-i}\),
\begin{align}\label{eq: MarkovAlphaPotentialFunction}
&|\left({\Phi}(s, \pi_i^{\prime}, \pi_{-i}) - {\Phi}(s, \pi_i, \pi_{-i})\right) \notag \\&\hspace{1cm}- \left(V_i(s, \pi_i^{\prime}, \pi_{-i}) - V_i(s, \pi_i, \pi_{-i})\right)| \leq \alpha,
\end{align}
which requires that the difference between an agent’s value function change and the Markov \(\alpha\)-potential function change due to a unilateral policy shift is \emph{uniformly bounded} by \(\alpha\). In contrast, our Definition \ref{def: eps MPG new} bounds this difference based on the magnitude of policy changes. Comparing \eqref{eqn: def eps MPG new} and \eqref{eq: MarkovAlphaPotentialFunction}, if \(\Phi\) is an MNPF for \(\game\) with parameter \(\nearParameter\), then \(\game\) is a Markov \(\alpha\)-potential game with \(\alpha \leq \kappa\sqrt{2|S|}\). 
}
\end{remark}

{
\begin{remark}\label{rem: AdvantageMNPF}
    Our Markov near-potential function framework \eqref{eqn: def eps MPG new} offers an advantage over the Markov \(\alpha\)-potential function \eqref{eq: MarkovAlphaPotentialFunction} by quantifying the gap between each agent's value function gradient and that of a potential function (Lemma \ref{lem: GradDiff}).
This property is essential for characterizing the convergent set of the decentralized actor-critic algorithm, as demonstrated in Theorems \ref{theorem:independent} and \ref{thm: ConvergenceFiniteEquilibrium}. To establish convergence, we show a positive correlation between policy update rates and the gradient of agents' value functions. Since this gradient closely approximates that of the potential function, with the closeness parameter defining the gap, we prove that the potential function consistently increases outside a neighborhood of the Nash equilibrium. This neighborhood, which defines the convergence set, varies in size based on the closeness parameter.
\end{remark}

The following proposition shows that a near-potential function can be constructed for any Markov game, and the approximate optimal solution of the near-potential function is an approximate Nash of the original game. 
{\begin{prop}\label{prop:AllMNPG_Eq_Approximate}
    For any game \(\game\),  there exists a tuple $(\Phi, \kappa)$ such that $\Phi$ is a MNPF of $\game$ with closeness parameter $\kappa$. Furthermore, for any \(\epsilon>0\) and any \(\pi^\ast\in \Pi\) such that \(\Phi(s,\pi^\ast) \geq \sup_{\pi\in \Pi} \Phi(s,\pi) - \epsilon\) for all \(s\in S\), \(\pi^\ast\) is a \((\nearParameter\sqrt{2|S|}+\epsilon)-\)stationary Nash equilibrium.
\end{prop}
}
\noindent\begin{proof}
For any game \(G\), we set 
    \(\Phi(s,\pi) = 0\) for every \(s\in S, \pi\in \Pi\). 
   We note that 
   {\small \begin{align*}
        &|V_i\left(s, \pi_i^{\prime}, \pi_{-i}\right)-V_i\left(s, \pi_i, \pi_{-i}\right)| \\
        \stackrel{\textit{(a)}}{\leq} &\frac{1}{1-\discount}\sum_{s'\in S,a_i\in A_i}\big| d^{\pi}_{\mu}(s')(\pi_i(s',a_i) - \pi_i'(s',a_i))\big|\cdot|{Q}_i(s',a_i;\policy')|\\ \stackrel{\textit{(b)}}{\leq} &\frac{r_{\max}}{(1-\gamma)^2} \|\pi_i-\pi_i'\|,
   \end{align*}}where \(\textit{(a)}\) is due to {multi-agent performance difference lemma (see Lemma \ref{lem: Multi-agentPerformanceDifference})}, and \(\textit{(b)}\) is due to the fact that for any \(\pi\in \Pi\),
    $$\max_{s,a_i}|Q_i(s,a_i;\pi)| \leq \frac{r_{\max}}{(1-\gamma)}$$ and 
    \(d_\mu^\pi(s)\in [0,1]\) for every \(s\in S\). Thus, \(\Phi\) is a MNPF for \(\game\) with the closeness-parameter \(\nearParameter = \frac{r_{\max}}{(1-\gamma)^2}\).

    {Next, we show that for any \(\epsilon>0\) and any \(\pi^\ast\in \Pi\) such that \(\Phi(s,\pi^\ast) \geq \sup_{\pi\in \Pi} \Phi(s,\pi) - \epsilon\) for all \(s\in S\), and thus \(\pi^\ast\) is a \((\nearParameter\sqrt{2|S|}+\epsilon)-\)stationary Nash equilibrium. Particularly, using \eqref{eqn: def eps MPG new}, we observe that for any \(i\in I,\pi_i'\in \Pi_i,\)
    \begin{align*}
        &V_i(\mu,\pi^\ast) - V_i(\mu,\pi_i',\pi_{-i}^\ast) \\
        \geq& \Phi_i(\mu,\pi^\ast) - \Phi_i(\mu,\pi_i',\pi_{-i}^\ast) - \kappa \|\pi_i'-\pi_i^\ast\| \geq -\epsilon -\kappa \sqrt{2|S|},
    \end{align*}
    where we note that \(\|\pi_i'-\pi_i^\ast\|\leq \sqrt{2|S|}\) using Cauchy Schwartz inequality. 
    }
\end{proof}

A  game \(\game\) may be associated with multiple MNPF with different $\kappa$. Proposition \ref{prop:AllMNPG_Eq_Approximate} suggests that an MNPF with a smaller closeness parameter \(\nearParameter\) is a better approximation of the original game in that the optimum of the potential function is a closer approximation of the Nash equilibrium in the original game.   
Following \cite{guo2023markov}, we can compute the MNPF with the smallest closeness parameter of a game as a semi-infinite linear program. We omit those details here for concise presentation.

\section{Decentralized Actor Critic Algorithm}
In this section, we present a decentralized learning algorithm in which each player makes decisions based solely on the current state information and their local reward feedback. Players do not need any knowledge about other players. Using MNPF, we provide theoretical guarantees on the long-run outcomes of the algorithm.

For every \(i\in I, s\in S, a_i\in A_i,\pi\in \Pi\), we define \emph{Q-function} as 
\begin{align}\label{eq: Q_i_function}
Q_i(s, a_i; \pi):= \stagePayoff_i(s,a_i, \pi_{-i}) + \discount \sum_{s'\in \stateSet} \transition(s'|s,a_i, \pi_{-i}) \vFunc_i(s', \pi),
\end{align}
    which is player $i$'s expected long-horizon discounted utility when the game starts in state \(s\) and they play action \(a_i\) in the 
    first stage and then employs policy \(\pi_i\) from the second stage onwards, and other players always employ policy $\pi_{-i}$.  
 With slight abuse of notation, we define \(Q_i(s;\pi) = (Q_i(s,a_i;\pi))_{a_i\in A_i}\in \R^{|A_i|}\). Furthermore, given \(Q_i\) and policy \(\pi\in \Pi\), we define the \emph{optimal one-stage deviation} of player $i$ in state \(s\in S\) as 
\begin{align}\label{eq: BestResponseExact}
    {\mathrm{br}}_i(s;\policy) &= \underset{\hat{\pi}_i\in \Delta(A_i)}{\arg\max}
    ~ \hat{\pi}_i^\top Q_i(s;\pi).
    \end{align}
\subsection{Decentralized Learning Algorithm and Preliminaries}\label{subsection:alg_preliminary}
We employ the discrete-time decentralized learning algorithm proposed in \cite[Algorithm 1]{maheshwari2022independent}, where each player adopts an actor-critic algorithm in a decentralized manner. In each iterate $t$ of the algorithm, every player $i \in I$ updates the following quantities: \textit{(i)} the counters \(\nk=(\nk(s))_{s \in S}\) and \(\nkak^t= (\nkak^t(s, \ai))_{s \in S, \ai \in \Ai}\), which keep track of the number of visits of all states and all state-action pairs up to the current iteration;
\textit{(ii)} their estimate of the local Q-functions \(q_i^t\), which is updated as a linear combination of the previous estimate and a new estimate based on the realized one-stage reward and the long-horizon discounted value from the next state as estimated from the q-function estimate and policy from previous iterate (refer \eqref{eq:d_q}); and \textit{(iii)} their local policies \(\piik\), which is updated as a linear combination of the policy in the previous iterate, and player $i$'s optimal one-stage deviation (refer \eqref{eq:d_pi}). 
Finally, every player samples an action $a_i^t\sim \pi_i^{t}$ with probability $(1-\theta)$ and from the uniform distribution over their action set $A_i$ with probability $\theta$, where \(\theta\in (0,1)\) is an exploration parameter.

\begin{remark}\label{rem: AsynchronousUpdate}
The updates in Algorithm \ref{alg:independent_decentralized} are \emph{asynchronous} because each player updates the counters, q-function estimate, and policy \emph{only} for the most recently visited state-(local) action pair, rather than for all state-action pairs.
\end{remark}

\begin{algorithm}[htp]
\textbf{Initialization:} $n^0(s)=0, \forall s \in\stateSet$; \(\tilde{n}_i^0(s, \ai)=0, \localQTilde_{i}^0(s, a_i)=0\), \(\pi_i^0(s)= 1/|\Ai|, ~\forall (i, \ai, s)\), and \(\explore\in (0,1)\). 
In stage 0, each player observes \(s^0\), choose their action $a_i^0 \sim \pi_i^0(s^0)$, and observe \(r_i^0=u_i(s^0,a^0)\). 

\textbf{In every iterate $t=1,2,...$,} each player observes $s^t$, and independently updates \(\{n^t_i,\nkak^t,\localQTilde_i^t,\policy_i^t\}\). \;

\smallskip
\textbf{Update $\nk,\nkak^t$}:  \(
    \nk(s^{t-1}) = n^{t-1}(s^{t-1})+1, \nkak^{t}(s^{t-1},a^{t-1}_i) = \nkak^{t-1}(s^{t-1},a^{t-1}_i)+1. 
\)

\textbf{Update $\localQTilde_i^t$}: Using the one-stage reward \(r_i^{t-1},\) update
\begin{equation}\label{eq:d_q}
\begin{aligned}
   & \localQTilde_{i}^{t}(s^{t-1},a^{t-1}_i) = \localQTilde_{i}^{t-1}(s^{t-1},a_i^{t-1})+ \stepFast(\nkak^{t}(s^{t-1},a_i^{t-1})) \\
    &\quad  \cdot\bigg( \reward_i^{t-1}+\discount 
   \pi_i^{t-1}(s^{t})^\top  \localQTilde_i^{t-1}(s^t)
    - \localQTilde_i^{t-1}(s^{t-1},a_i^{t-1})\bigg). 
\end{aligned}
\end{equation} 

\textbf{Update \(\policy_i^{t}\):} Pick \(\widehat{\br}_i \in \arg\max_{\pi_i\in\Delta(A_i)}\pi_i^\top q_i^{t-1}(s^{t-1})\) and set  
\begin{equation}\label{eq:d_pi}
\begin{aligned}
    \policy_i^{t}(s^{t-1}) =\policy_i^{t-1}(s^{t-1})+ \beta(\nk(s^{t-1}))\cdot(\widehat{\br}_i - \policy_i^{t-1}(s^{t-1})). 
\end{aligned}
\end{equation}

\textbf{Sample action and observe reward:} 
\begin{align}\label{eq: ActionSampling}
    a_i^t \sim (1-\explore)\pi_i^t(s^t)+ \explore \cdot (1/|A_i|)\mathbbm{1}_{A_i}.
\end{align}

Each player observes their own reward $r_i^t=u_i(s^t,a^t)$.
\caption{Decentralized Learning Algorithm}
\label{alg:independent_decentralized}
\end{algorithm}

{Next, we state assumptions that are central to analyze the convergence of Algorithm \ref{alg:independent_decentralized}. 
{
\begin{assumption}\label{as:basic}
(a) The initial state distribution $\mu(s)>0$ for all $s \in S$. 
Additionally, \(\min_{s,s'\in S, a\in A} P(s'|s,a) > 0\). 
\\
(b) The step-sizes satisfy \begin{itemize}
    \item[(i)] \(\sum_{n=0}^{\infty}\stepFast(n)=\infty, \sum_{n=0}^{\infty}\stepSlow(n)=\infty\), \(\lim_{n\rightarrow\infty}\stepFast(n)=\lim_{n\rightarrow\infty}\stepSlow(n)=0\);
    \item[(ii)] There exist some \(q,q'\geq 2\), \(\sum_{n=0}^{\infty}\stepFast(n)^{1+q/2}<\infty\) and \(\sum_{n=0}^{\infty}\stepSlow(n)^{1+q'/2}<\infty\);
    \item[(iii)] \(\sup_n \stepFast([xn])/\stepFast(n) < \infty\), \(\sup_n \stepSlow([xn])/\stepSlow(n) < \infty\) for all \(x\in (0,1)\), {where \([xn]\) denotes the largest integer less than or equal to \(xn\)}. Additionally, \(\{\stepFast(n)\},\{\stepSlow(n)\}\) are non-increasing in \(n\);
    \item[(iv)] \(\lim_{n\rightarrow\infty}\stepSlow(n)/\stepFast(n) = 0\). 
\end{itemize}
\end{assumption}
}
Assumption \ref{as:basic}-(a) is a standard assumption to ensure ergodicity of the Markov state transition for learning Q-functions. 
Additionally, Assumption \ref{as:basic}-(b) is standard assumption on step sizes in actor-critic algorithms \cite{maheshwari2022independent}. 
}

{

 To study the convergent set of Algorithm \ref{alg:independent_decentralized}, we apply two-timescale asynchronous stochastic approximation (TTASA) theory \cite{perkins2013asynchronous}. The discrete-time updates in Algorithm 1, along with Assumption \ref{as:basic}, satisfy the conditions in \cite{perkins2013asynchronous}, as noted in \cite{maheshwari2022independent}. TTASA theory ensures two things: First, the (fast) q-function estimates asymptotically track the Q-functions of the current policy. Using Assumption \ref{as:basic} and the contraction property of temporal difference operator, we observe that \[\lim_{t\to \infty} \|\localQTilde_{i}^{t}(s) - {Q}_i(s; \policy^{t}_i, \policy^{t,\theta}_{-i})\|_\infty=0\] holds with probability 1 for all \(s \in S\) and \(i \in \playerSet\), where
{\[
\pi_{-i}^{t,\theta}(s) := (1-\theta)\pi_{-i}^t+\theta(1/|A_i|)\cdot \mathbf{1}_{A_i}.
\]}
 Second, the convergent set of policy updates is the same as the convergent limit of any absolutely continuous trajectory of the following differential inclusion:
\begin{align}\label{eq: diff_inclusion}
    &\frac{d}{d\tau}{\varpi}_i^\tau(s) \in {\bar{\eta}(s)}\lr{ \br_i(s;\varpi^\tau_i, \varpi^{\tau,\theta}_{-i}) -\varpi_i^\tau(s)}, 
\end{align}
where $\tau\in [0, \infty)$ is a continuous-time index, 
{\[\varpi_{-i}^{\tau,\theta}(s):= (1-\theta)\varpi_{-i}^{\tau}(s)+\theta(1/|A_{-i}|)\cdot \mathbf{1}_{A_{-i}}, \quad  \forall \ s\in S, i \in \playerSet, \]}and $ \bar{\eta}(s)\in [\eta,1]$ for some positive scalar $\eta$ that depends on the ergodicity of the probability transition function. 
}

\subsection{Convergence Guarantees}
We now present the first main result of this paper, which characterizes the convergent set of policy updates in Algorithm \ref{alg:independent_decentralized} in terms of the superlevel set of a MNPF over the set of approximate Nash equilibria. This characterization is based on the closeness parameter 
\(\kappa\) associated with the MNPF.

\begin{theorem}\label{theorem:independent}
Consider a Markov game \(\game\) and an associated MNPF \(\Phi\) with closeness-parameter \(\nearParameter\). Under Assumptions \ref{as:basic}, 
 the sequence of policies $\{\pi^t\}_{t=0}^{\infty}$ induced by Algorithm \ref{alg:independent_decentralized}, with the exploration parameter {\begin{align}\label{eq: ExpParameter}\theta \leq {\lambda \sqrt{2|S|}}/({r_{\max} (2/(1-\gamma)^3 + 4/(D(1-\gamma)^2)) }),\end{align}} converge almost surely to the set 
 $$\Lambda := \{\pi: \Phi(\mu,\pi) \geq \min_{y\in \textsf{NE}(\Theta(\nearParameter+\lambda))}\Phi(\mu, y)\},$$ 
 where \(\lambda\) is a positive scalar, \[\Theta := DN^2 \sqrt{2|S|}/\eta, D = \frac{1}{1-\discount} \max_{i,\pi_{-i},s}\Big|{d^{\pi_i^\dagger, \pi_{-i}}_{\mu}(s)}/{\mu(s)}\Big|, \BR_i \in \arg\max_{\pi_i\in \Pi_i}{}{V}_i(\mu,\policy_i,\pi_{-i}),\]
  and $\eta$ is a positive scalar that depends on the ergodicity of the probability transition function.
\end{theorem}

\begin{proof}
Following Section \ref{subsection:alg_preliminary}, it is sufficient to show that every absolutely continuous trajectory of \eqref{eq: diff_inclusion} converges to the set \(\Lambda\).

We construct a Lyapunov function candidate $\phi: [0, \infty) \to \mathbb{R}$ as 
$$
\phi(\tau) = \max_{\varpi\in \Pi}{\Phi}(\mu,\varpi)- {\Phi}(\mu, \varpi^\tau),
$$
which is the difference of the MNPF at its maximizer with that of its value at \(\varpi^\tau\). 
The key step in the proof is to show that \(\phi(\tau)\) is weakly decreasing in \(\tau\) as long as \(\varpi^\tau\not\in \textsf{NE}(\Theta(\nearParameter+\lambda))\). 
We claim that it is sufficient to establish that for any \(\varpi^{\tau}\) that is an \(\epsilon-\)stationary Nash equilibrium, the following equation holds 
\begin{align}\label{eq: PhiDecay}
     \frac{d \phi(\tau)}{d\tau} \leq (\nearParameter+\lambda) N^2\sqrt{2|S|}-\frac{\eta}{D}\epsilon = \frac{\eta}{D} \Theta (\kappa+\lambda) - \frac{\eta}{D}\epsilon.
\end{align} 
Indeed, if \eqref{eq: PhiDecay} holds, then for any \(\epsilon > \Theta(\nearParameter+\lambda)\), \(\phi(\tau)\) decreases at a rate \(\eta/D\cdot(\epsilon-\Theta(\nearParameter+\lambda))\). Since \(\phi\) is bounded, any absolutely continuous trajectory of \eqref{eq: diff_inclusion} will enter the set \(\textsf{NE}(\Theta(\kappa+\lambda))\) in finite time, starting from any initial policy. Subsequently, even if trajectories leave this set, the function \(\Phi(\mu, \cdot)\) cannot decrease below \(\min_{\pi\in \textsf{NE}(\Theta(\nearParameter+\lambda))}\Phi(\mu,\pi)\), and once the trajectory leaves this set the potential function will always increase (as \(d\phi(\tau)/d\tau <0\)). Thus, it only remains to show  that \eqref{eq: PhiDecay} hold. Towards that goal, we note that 
\begin{align}\label{eq: dphi_firsteq}
    &\frac{d}{d\tau}\phi(\tau) = -\sum_{i\in I , s \in S}\lr{\frac{\partial {}{\potential}(\mu,\varpi^{\tau}) }{\partial \varpi_i(s)}}^\top\frac{d \varpi_i^{\tau}(s)}{d\tau} \notag\\ &
    = \sum_{i\in I, s \in S}\lr{ \frac{\partial {V}_i(\mu,\varpi^\tau) }{\partial \varpi_i(s)}  - \frac{\partial \Phi(\mu,\varpi^\tau) }{\partial \varpi_i(s)}}^\top\frac{d \varpi_i^\tau(s)}{d\tau}\notag \\ &\quad \quad -\sum_{i\in I, s\in S}\lr{\frac{\partial {V}_i(\mu,\varpi^\tau) }{\partial \varpi_i(s)}}^\top \frac{d \varpi_i^\tau(s)}{d\tau} \notag\\   
     &\stackrel{(i)}{\leq} \nearParameter \sum_{i\in I}\sqrt{\sum_{s\in S, a_i\in A_i}(\br_i(s,a_i;\varpi_i^\tau, \varpi_i^{\tau,\theta}) -\varpi_i^\tau(s, a_i))^2}
\notag  \\ &\quad \quad +  \sum_{i\in I, s\in S}\frac{d^{\varpi^\tau}_{\mu}(s)}{\discount-1}\bar{\eta}(s){Q}_i(s;\varpi^\tau)^\top\lr{ \widetilde{\mathrm{br}}_i^{\tau, \theta}(s)-\varpi_i^\tau(s)},\notag \\ 
&\leq \nearParameter N\sqrt{|S|\sum_{a_i\in A_i}\lr{\br_i(s,a_i;\varpi_i^\tau, \varpi_i^{\tau,\theta})+\varpi_i^\tau(s, a_i)}}
\notag  \\ &\quad \quad +  \sum_{i\in I, s\in S}\frac{d^{\varpi^\tau}_{\mu}(s)}{\discount-1}\bar{\eta}(s){Q}_i(s;\varpi^\tau)^\top\lr{ \widetilde{\mathrm{br}}_i^{\tau, \theta}(s)-\varpi_i^\tau(s)}\notag,
\\ 
&= \nearParameter N\sqrt{2|S|}
\notag  \\ &\quad \quad +  \sum_{i\in I, s\in S}\frac{d^{\varpi^\tau}_{\mu}(s)}{\discount-1}\bar{\eta}(s){Q}_i(s;\varpi^\tau)^\top\lr{ \widetilde{\mathrm{br}}_i^{\tau, \theta}(s)-\varpi_i^\tau(s)},
\end{align}
where \(\widetilde{\mathrm{br}}^{\tau, \theta}_i(s) \in {\mathrm{br}}_i(s;\varpi_i^\tau, \varpi^{\tau, \theta}_{-i})\), and \((i)\) is due to Lemma \ref{lem: GradDiff}. Next, by adding and subtracting the term \(\sum_{i\in I, s\in S}\frac{d^{\varpi^\tau}_{\mu}(s)}{\discount-1}\bar{\eta}(s)Q_i(s;\varpi_i^\tau, \varpi_{-i}^{\tau,\theta})^\top\lr{ \widetilde{\mathrm{br}}_i^{\tau, \theta}(s)-\varpi_i^\tau(s)}\) on the RHS in \eqref{eq: dphi_firsteq}, we obtain 
 \begin{align}\label{eq: EqrefTerm1and2}
\frac{d\phi(\tau)}{d\tau}&\leq \nearParameter N\sqrt{2|S|}
\notag  \\ &\quad \quad +  \underbrace{\sum_{i\in I, s\in S}\frac{d^{\varpi^\tau}_{\mu}(s)}{\discount-1}\bar{\eta}(s)({Q}_i(s;\varpi^\tau)-Q_i(s;\varpi_i^\tau, \varpi_{-i}^{\tau,\theta}))^\top\lr{ \widetilde{\mathrm{br}}_i^{\tau, \theta}(s)-\varpi_i^\tau(s)}}_{\texttt{Term 1}}\notag
\\ &\quad \quad +  \underbrace{\sum_{i\in I, s\in S}\frac{d^{\varpi^\tau}_{\mu}(s)}{\discount-1}\bar{\eta}(s)Q_i(s;\varpi_i^\tau, \varpi_{-i}^{\tau,\theta})^\top\lr{ \widetilde{\mathrm{br}}_i^{\tau, \theta}(s)-\varpi_i^\tau(s)}}_{\texttt{Term 2}},
    \end{align}
First, we bound \texttt{Term 1} in the above equation. Note that 
\begin{align}\label{eq: Term1}
    \texttt{Term 1} &\leq \frac{1}{1-\gamma}\sum_{i\in I}\max_{s\in S}\bigg|({Q}_i(s;\varpi^\tau)-Q_i(s;\varpi_i^\tau, \varpi_{-i}^{\tau,\theta}))^\top\lr{ \widetilde{\mathrm{br}}_i^{\tau, \theta}(s)-\varpi_i^\tau(s)}\bigg|\notag  \\ 
    &\leq \frac{2}{1-\gamma}\sum_{i\in I}\max_{s\in S, a_i\in A_i}|({Q}_i(s,a_i;\varpi^\tau)-Q_i(s,a_i;\varpi_i^\tau, \varpi_{-i}^{\tau,\theta}))|\notag
 \\
&\leq \frac{2\theta N^2}{(1-\gamma)^3}r_{\max},
\end{align}
where the last inequality is due to Lemma \ref{lem: Q_diff}. Next, we analyze \texttt{Term 2} in \eqref{eq: EqrefTerm1and2}.  Using \eqref{eq: BestResponseExact}, it holds that, for every \(i\in I\),
$${Q}_i(s;\varpi_i^\tau,\varpi_{-i}^{\tau, \theta})^\top\lr{ \widetilde{\mathrm{br}}_i^{\tau, \theta}(s)-\varpi_i^\tau(s)} \geq 0.$$
Furthermore, given the fact that \(\bar{\eta}(s)>\eta\) and \(d_{\mu}^{\varpi^\tau}(s)\geq 0\) for all \(i\in\playerSet,s\in\stateSet\), we bound  
\begin{align}\label{eq:two_stop}
    &\texttt{Term 2} \leq
  -\frac{\eta}{1-\discount}\sum_{i,s}d^{\varpi^\tau}_{\mu}(s){Q}_i(s;\varpi_i^\tau, \varpi_{-i}^{\tau,\theta})^\top \lr{ \widetilde{\mathrm{br}}_i^{\tau, \theta}(s)-\varpi_i^\tau(s)}.
    \end{align}
We claim that 
\begin{align}\label{eq: Claims}
&\sum_{i,s}d^{\varpi^\tau}_{\mu}(s){Q}_i(s;\varpi_i^\tau, \varpi_{-i}^{\tau,\theta})^\top \lr{ \widetilde{\mathrm{br}}_i^{\tau, \theta}(s)-\varpi_i^\tau(s)} \notag \\ &\geq 1/D\cdot\sum_{i,s}d^{\BR_i, \varpi_{-i}^{\tau,\theta}}_{\mu}(s){Q}_i(s;\varpi_i^\tau, \varpi_{-i}^{\tau,\theta})^\top \lr{ \BR_i(s)-\varpi_i^\tau(s)},
\end{align}
where \(\BR_i \in \arg\max_{\pi_i\in \Pi_i}{}{V}_i(\mu,\policy_i,\varpi_{-i}^\tau)\) be a best response\footnote{Note that $\BR_i$ maximizes the total payoff instead of just maximizing the payoff of one-stage deviation. Therefore, $\BR_i$ is different from the optimal one-stage deviation policy. We drop the dependence of $\BR_i$ on $\varpi_{-i}^\tau$ for notational simplicity. }  of player \(i\), given that the strategy of other players is \(\varpi_{-i}^\tau\). 
Indeed,   
note that 
\begin{align}
     &\sum_{i,s}d^{\BR_i, \varpi_{-i}^{\tau,\theta}}_{\mu}(s){Q}_i(s;\varpi_i^\tau, \varpi_{-i}^{\tau,\theta})^\top \lr{ \BR_i(s)-\varpi_i^\tau(s)}\notag \\ 
    &\leq \sum_{i,s}d^{\BR_i, \varpi_{-i}^{\tau,\theta}}_{\mu}(s)\max_{\hat{\pi}_i\in \Delta(A_i)}{Q}_i(s;\varpi_i^\tau, \varpi_{-i}^{\tau,\theta})^\top \lr{ \hat{\pi}_i(s)-\varpi_i^\tau(s)}\notag 
    \\
    &\leq \sum_{i,s}d_\mu^{\varpi^{\tau}}(s)\bigg\|\frac{d^{\BR_i, \varpi_{-i}^{\tau,\theta}}_{\mu}}{d_\mu^{\varpi^{\tau}}}\bigg\|_{\infty}\notag \cdot\max_{\hat{\pi}_i\in \Delta(A_i)}{Q}_i(s;\varpi_i^\tau, \varpi_{-i}^{\tau})^\top \lr{ \hat{\pi}_i(s)-\varpi_i^\tau(s)} \notag \\
    &\stackrel{\textit{(i)}}{\leq}   D \sum_{i,s}d_\mu^{\varpi^{\tau}}(s)\cdot\max_{\hat{\pi}_i\in \Delta(A_i)}{Q}_i(s;\varpi_i^\tau, \varpi_i^{\tau,\theta})^\top \lr{ \hat{\pi}_i(s)-\varpi_i^\tau(s)}
  \notag\\
    & = D \sum_{i,s}d_\mu^{\varpi^{\tau}}(s) \cdot{Q}_i(s;\varpi_i^\tau, \varpi_{-i}^{\tau,\theta})^\top \lr{ \widetilde{\mathrm{br}}_i^{\tau, \theta}(s)-\varpi_i^\tau(s)}, \label{eq:max}
\end{align}
where 
\textit{(i)} is due to the fact that \(d_{\mu}^{\varpi^{\tau,\theta}}(s)\geq (1-\discount)\mu(s)\) along with the definition of \(D\).

Using \eqref{eq: Claims} in \eqref{eq:two_stop}, we obtain 
  {\small \begin{align}\label{eq: PotentialFirst}
    &\texttt{Term 2} \notag \\ &\leq -\frac{\eta}{D(1-\discount)}\sum_{i,s}d^{\BR_i, \varpi^{\tau,\theta}_{-i}}_{\mu}(s)\cdot\max_{\hat{\pi}_i\in \Delta(A_i)}{Q}_i(s;\varpi_i^\tau, \varpi_{-i}^{\tau,\theta})^\top \lr{ \hat{\pi}_i(s)-\varpi_i^\tau(s)}\notag \\ 
    &\leq  -\frac{\eta}{D(1-\discount)}\sum_{i,s}d^{\BR_i, \varpi_{-i}^{\tau,\theta}}_{\mu}(s){Q}_i(s;\varpi_i^\tau, \varpi_{-i}^{\tau,\theta})^\top \lr{ \BR_i(s)-\varpi_i^\tau(s)},
\end{align}}
where the last inequality is because \(\pi_i^\dagger \in \Delta(A_i)\). 
Finally, note that 
\begin{align}\label{eq: PotentialSecond}
    &\sum_{i,s}d^{\BR_i, \varpi_{-i}^{\tau,\theta}}_{\mu}(s){Q}_i(s;\varpi_i^\tau, \varpi_{-i}^{\tau,\theta})^\top \lr{ \BR_i(s)-\varpi_i^\tau(s)}\notag \\ &= \sum_{i,s}d^{\BR_i, \varpi_{-i}^\tau}_{\mu}(s)\lr{Q_i(s;\varpi_i^\tau, \varpi_{-i}^{\tau,\theta})- V_i(s,\varpi_i^\tau, \varpi_{-i}^{\tau,\theta})}^\top(\pi_i^\dagger(s) - \varpi^\tau(s)) \notag 
\\
&= (1-\gamma)\sum_i{V}_i(\mu,\BR_i,\varpi_{-i}^{\tau,\theta})- {}{V}_i(\mu,\varpi_i^\tau, \varpi_{-i}^{\tau,\theta}),
\end{align}
where the last inequality is due to multi-agent performance difference lemma (Lemma \ref{lem: Multi-agentPerformanceDifference}). 
Combining \eqref{eq: PotentialFirst} and \eqref{eq: PotentialSecond}, we obtain 
\begin{align}\label{eq: Term2Final}
    \texttt{Term 2} 
    &{\leq}  -\frac{\eta}{D} \sum_{i}\lr{ {V}_i(\mu,\BR_i,\varpi_{-i}^{\tau, \theta})- {}{V}_i(\mu,\varpi_i^\tau, \varpi_{-i}^{\tau,\theta})} \notag \\ 
    &\leq -\frac{\eta}{D} \sum_{i}\lr{ {V}_i(\mu,\BR_i,\varpi_{-i}^{\tau})- {}{V}_i(\mu,\varpi_i^\tau, \varpi_{-i}^{\tau})}\notag  \\ &\quad + \frac{2\eta}{D}{\max_{\pi_i\in\Pi_i}\sum_i|V_i(\mu,\pi_i,\varpi_{-i}^{\tau,\theta})-V_i(\mu,\BR_i,\varpi_{-i}^{\tau})|} \notag  \\ 
    &\leq -\frac{\eta}{D} \sum_{i}\lr{ {V}_i(\mu,\BR_i,\varpi_{-i}^{\tau})- {}{V}_i(\mu,\varpi_i^\tau, \varpi_{-i}^{\tau})} + \frac{4\eta\theta N^2}{D(1-\gamma)^2}r_{\max},
\end{align}
where the last inequality is due to Lemma \ref{lem: V_diff}. Using \eqref{eq: Term1} and \eqref{eq: Term2Final} in \eqref{eq: EqrefTerm1and2}, we obtain 
\begin{align*}
    \frac{d\phi(\tau)}{d\tau}&\leq \nearParameter N\sqrt{2|S|} + \frac{2\theta N^2}{(1-\gamma)^3}r_{\max} + \frac{4\theta N^2}{D(1-\gamma)^2}r_{\max} \\ &\quad -\frac{\eta}{D} \sum_{i}\lr{ {V}_i(\mu,\BR_i,\varpi_{-i}^{\tau})- {}{V}_i(\mu,\varpi_i^\tau, \varpi_{-i}^{\tau})},
\end{align*}
where we used the fact that \(\eta\leq 1\). Since \(\theta \leq \lambda \frac{\sqrt{2|S|}}{r_{\max} (2/(1-\gamma)^3 + 4/(D(1-\gamma)^2)) }\), it ensures that 
\begin{align*}
     \frac{d\phi(\tau)}{d\tau} \leq (\kappa + \lambda )N^2\sqrt{2|S|} -\frac{\eta}{D} \sum_{i}\lr{ {V}_i(\mu,\BR_i,\varpi_{-i}^{\tau})- {}{V}_i(\mu,\varpi_i^\tau, \varpi_{-i}^{\tau})}. 
\end{align*}
From the definition of best response, \(V_i(\mu,\BR_i,\varpi_{-i}^{\tau}) \geq {V}_i(\mu,\pi_i,\varpi_{-i}^\tau)\) for every \(i\in I\). Furthermore, if \(\varpi^{\tau}\) is not an \(\epsilon-\)Nash equilibrium then there exists a player \(i\in I\) and a policy \(\pi_i\) such that  \(
{V}_i(\mu,\pi_i,\varpi_{-i}^\tau)- {}{V}_i(\mu,\varpi^\tau) \geq -\epsilon.
\)
Therefore, \(
     \frac{d \phi(\tau)}{d\tau} \leq (\nearParameter +\lambda)N^2\sqrt{2|S|}-\frac{\eta}{D} \epsilon. \)
This proves \eqref{eq: PhiDecay} and concludes the proof. 
\end{proof}

Next, we show that when game $\game$ has a finite number of equilibria and the function \(\Phi(\mu,\cdot)\) is Lipschitz (Assumption \ref{assm: FintelyManyEq}), Theorem \ref{theorem:independent} can be strengthened. Specifically, we show that the learning dynamics converge to neighborhood of equilibrium set instead of just converging to a level set of the MNPF. 
\begin{assumption}\label{assm: FintelyManyEq}
The equilibrium set is finite \(\textsf{NE}(0)=\{\pi^{\ast 1}, \pi^{\ast 2}, ..., \pi^{\ast K}\}\). 
\end{assumption}
Assumption \ref{assm: FintelyManyEq} has been adopted in previous literature in MARL (e.g. \cite{fox2022independent}). In fact, \cite{doraszelski2010theory} has shown that Assumption \ref{assm: FintelyManyEq} holds generically. For any $\delta\geq 0$, recall that $NE(\delta)$ is the set of approximate Nash equilibrium with approximation parameter $\delta$. We define $\Gamma(\delta)$ as the maximum distance between an approximate equilibrium in $NE(\delta)$ and the equilibrium set. 
\begin{align}\label{eq: f_def}
          \Gamma(\delta)  := \max_{\pi\in \textsf{NE}(\delta)} \min_{k\in[K]} \|\pi-\pi^{\ast k}\|.
\end{align}
{Since \(\textsf{NE}(0)\) denotes the set of Nash equilibria, \(\Gamma(0)=0\). }
\begin{theorem}\label{thm: ConvergenceFiniteEquilibrium}
Consider a Markov game \(\game\) and an associated MNPF \(\Phi\) with closeness-parameter \(\nearParameter\). Suppose Assumptions \ref{as:basic}-\ref{assm: FintelyManyEq} hold. There exists \(\bar{\nearParameter}, \bar{\epsilon}\)  such that if \(\nearParameter + \lambda \leq  \bar\nearParameter\), for some \(\lambda > 0\), the sequence of policies $\{\pi^t\}_{t=0}^{\infty}$ induced by Algorithm \ref{alg:independent_decentralized}, with the exploration parameter \eqref{eq: ExpParameter}, converge almost surely to the set \(
         \tilde{\Lambda}:= \left\{  \pi  | \exists \ k \in [K] : \ \|\pi-\pi^{\ast k}\| \leq \chi \right\},
\)
\begin{align}\label{eq: Chi_def}
\chi:=\min_{0\leq  \epsilon\leq \bar \epsilon}\Gamma(\epsilon+Z(\nearParameter+\lambda)) + ({2DLN\sqrt{|S|}\Gamma(Z(\nearParameter+\lambda))})/{(\eta\epsilon)},
\end{align} 
\(\lambda\) is a positive scalar, \(Z, D, \eta\) are defined in Theorem \ref{theorem:independent}, and \(L\) is defined in Lemma \ref{lem: LipschitzNearPF}. 
\end{theorem}
{
\begin{remark}
    In \eqref{eq: Chi_def}, \(\chi\) is weakly increasing with \(\kappa\) and \(\lambda\), as \(\Gamma(\cdot)\) is weakly increasing and upper-semicontinuous (see Lemma \ref{lem: Gamma_Prop}). If the game is a Markov potential game (i.e., \(\kappa = 0\)) and players have access to exact Q-functions, then they do not need to explore (i.e., \(\lambda = 0\)). In this case, the policy updates in \eqref{eq:d_pi}, evaluated at the exact Q-functions, converge to a Nash equilibrium. Indeed, with \(\kappa = 0\) and \(\lambda = 0\), one can follow the same steps as in the proof of Theorem \ref{thm: ConvergenceFiniteEquilibrium} to show that the second term in the optimization problem \eqref{eq: Chi_def} does not depend on \(\varepsilon\) since \(\Gamma(0) = 0\), and consequently, \(\chi = 0\).
\end{remark}
}

\begin{proof}[Proof of Theorem \ref{thm: ConvergenceFiniteEquilibrium}]
Similar to the proof of Theorem \ref{theorem:independent}, this result boils down to show that, under the conditions presented in Theorem statement, any absolutely continuous trajectory of the continuous time differential inclusion \eqref{eq: diff_inclusion} converges to the set \(\tilde{\Lambda}\). The proof comprises of two parts: first, we show that the trajectory of dynamical system \eqref{eq: diff_inclusion} will eventually converge to a close neighborhood of an approximate equilibrium, where one \(\pi^{\ast k}\) for some \(k\in [K]\). Second, we bound the maximum distance that the policy trajectory can be away from that equilibrium. These two steps together show that equilibrium will be refined in the set $\tilde{\Gamma}$ that is in the neighborhood of one equilibrium $\pi^{\ast k}$.

Before presenting each of these steps in detail, we present a few preliminary results that will be used later. Define 
$$
    d^\ast := \min_{k,l\in [K], k\neq l}\|\pi^{\ast k}- \pi^{\ast l}\|$$ 
as the minimum distance between a pair of equilibrium policies. Additionally, define $$\mc{B}(\pi;r):=\{\pi'\in \Pi: \|\pi'-\pi\|_2\leq r\}$$ as the neighborhood set of a policy $\pi$ with radius $r$. Let \(\zeta\) be such that \(\Gamma(\zeta)\leq  d^\ast/4\),  which exists due to Lemma \ref{lem: Gamma_Prop}. 
This construction ensures that for any \(k, l\in [K]\) such that \(k\neq l\) it holds that 
$$
    \mc{B}(\pi^{\ast k}; \Gamma(\zeta))\cap \mc{B}(\pi^{\ast l}; \Gamma(\zeta)) = \varnothing.
$$
Moreover, from the definition of \(\Gamma(\cdot)\) in \eqref{eq: f_def}, it must hold that for any \(\delta> 0\), \(
    \textsf{NE}(\delta)\subseteq \cup_{k\in [K]}\mc{B}(\pi^{*k}; \Gamma(\delta)).  
\)
Additionally, using the fact that for any \(k,l\in [K]\) such that \(k\neq l\), it must hold that \(\mc{B}(\pi^{\ast k}; \Gamma(\zeta))\cap \mc{B}(\pi^{\ast l}; \Gamma(\zeta)) = \varnothing,\). Therefore, we conclude that \(\textsf{NE}(\zeta)\) is contained in a disjoint union of sets, which is \(\sqcup_{k\in [K]}\mc{B}(\pi^{*k}; \Gamma(\zeta))\). We select \(\bar{\nearParameter}, \bar{\epsilon}\) such that \[\bar{\epsilon}+ \Theta\bar{\nearParameter} < \zeta/2, \text{and}, \Gamma(\bar{\epsilon}+\Theta\bar\nearParameter) \leq \frac{\zeta \eta d^\ast}{32NDL\sqrt S}.\] Lemma \ref{lem: Gamma_Prop} ensures that such construction exists. Finally, we define \(\kappa':=\nearParameter+\lambda\). 

From the proof of Theorem \ref{theorem:independent} we know that starting from any initial policy, any solution of \eqref{eq: diff_inclusion} eventually hits the set \(\textsf{NE}(\bar{\epsilon}+\Theta\nearParameter')\) in finite time.   Suppose that the trajectory leaves the component of the set \(\textsf{NE}(\bar{\epsilon}+\Theta\nearParameter')\) around the neighborhood of \(\pi^{\ast k}\) and enters the component in the neighborhood of \(\pi^{\ast l}\), for some \(k,l\in [K]\) such that \(k\neq l\). Since \(\bar{\epsilon}+\Theta\nearParameter'\leq \bar{\epsilon}+\Theta\bar{\nearParameter}<\zeta\), it holds that \(\textsf{NE}(\bar{\epsilon}+\Theta\nearParameter')\subseteq \textsf{NE}(\zeta),\) which is contained in the set \(\sqcup_{k\in [K]}\mc{B}(\pi^{\ast k}; \Gamma(\zeta))\). Therefore, any trajectory has to leave \(\mc{B}(\pi^{\ast k}; \Gamma(\zeta))\) and enter \(\mc{B}(\pi^{\ast l}; \Gamma(\zeta))\). Let \(t_1\) denote the time when the trajectory leaves the component of \(\textsf{NE}({\bar{\epsilon}+\Theta\nearParameter'})\) in neighborhood of \(\pi^{\ast k}\), \(t_2\) denote the time when it leaves \(\mathcal{B}(\pi^{\ast k}; {\Gamma(\zeta)})\), \(t_3\) denote the time when the trajectory enters \(\mathcal{B}(\pi^{\ast l}; {\Gamma(\zeta)})\), and \(t_4\) denotes the time when the trajectory enters the component of \(\textsf{NE}({\bar{\epsilon}+\Theta\nearParameter})\) around \(\pi^{\ast l}\). Since the function \(\phi(\tau)\) is decreasing outside \(\textsf{NE}({\bar\epsilon+\Theta\nearParameter'})\), it must hold that \( \Phi(\mu, \varpi^{t_2}) \geq \Phi(\mu, \varpi^{t_1})\) and \(\Phi(\mu, \varpi^{t_4}) \geq \Phi(\mu, \varpi^{t_3}).\)
 We claim that 
    \begin{align}\label{eq:pi_t_3_t_2_}
       \| \varpi^{t_2}-\varpi^{t_3}\| \geq d^\ast/2. 
    \end{align}
We show this by contradiction. Suppose that \(\|\varpi^{t_2} - \varpi^{t_3}\| < d^\ast/2\). Since \(\varpi^{t_2}\) lies on the boundary of \( \mathcal{B}(\pi^{\ast k}; {\Gamma(\zeta)})\) and \(\varpi^{t_3}\) on  \(\mathcal{B}(\pi^{\ast l}; {\Gamma(\zeta)})\), it must hold that 
$$
        \|\varpi^{t_2}-\pi^{\ast k}\| = \Gamma(\zeta) =\|\varpi^{t_3}-\pi^{\ast l}\|.
        $$
    Furthermore, using the definition of \(d^\ast\), we know that \(\|\pi^{\ast k} - \pi^{\ast l}\| \geq d^\ast\). But, from triangle inequality it must hold that 
    \begin{align*}
        \|\pi^{\ast k} - \pi^{\ast l}\| &\leq \|\pi^{\ast k} -\varpi^{t_2}\| + \| \varpi^{t_2}-\varpi^{t_3}\| + \|\varpi^{t_3} - \pi^{\ast l}\| \\&= 2\Gamma(\zeta) + \| \varpi^{t_2}-\varpi^{t_3}\|< d^\ast,
    \end{align*}
    which leads to a contradiction. Thus, \eqref{eq:pi_t_3_t_2_} holds. 

    Using \eqref{eq:pi_t_3_t_2_} and the fact that \(\|\dot{\varpi^\tau}\|\leq 2N\sqrt{|S|}\), it must hold that \(t_3-t_2 \geq d^\ast/(4N\sqrt{|S|})\). Additionally, using the fact the trajectory in the time interval \([t_2,t_3]\) lies outside \(\textsf{NE}({\zeta})\) and \eqref{eq: PhiDecay}, we conclude that 
    \begin{align}\label{eq: PhiT3MinusT2}
         \phi(t_3) - \phi(t_2) \leq - (\zeta - \Theta \bar{\nearParameter}) \frac{\eta d^\ast}{4ND\sqrt S}.
    \end{align}

Define 
$$\underline{\phi} := \min_{\pi\in \mc{B}(\pi^{\ast k}; {\Gamma(\bar{\epsilon}+\Theta\nearParameter' )})} \Phi(\mu, \pi), \quad  \bar{\phi} := \max_{\pi\in \mc{B}(\pi^{\ast l}; {\Gamma(\bar{\epsilon}+\Theta\nearParameter' ) })} \phi(\pi).$$
We claim that
\begin{align}\label{eq: PhiDiffEq}
\bar{\phi} \leq \underline{\phi}.     
\end{align}
Suppose that this is true, it shows that once the trajectory leaves the component of approximate equilibrium around \(\pi^{\ast k}\) and enters the component of \(\pi^{\ast l}\) it will never enter the component of approximate equilibrium around \(\pi^{\ast k}\)  in future. Thus, eventually the trajectories will visit the component of approximate equilibrium only around at most one equilibrium. Now we show \eqref{eq: PhiDiffEq}. Let \(y_k\in \mc{B}(\pi^{\ast k}; {\Gamma(\bar{\epsilon}+\Theta\nearParameter' )}), y_l\in \mc{B}(\pi^{\ast l}; {\Gamma(\bar{\epsilon}+\Theta\nearParameter' )})\) be such that \(\underline{\phi} =\phi(y_k),  \bar{\phi}=\phi(y_l)\).  This ensures that \(
    \|y_k - \varpi^{t_1}\| \leq 2 \Gamma(\bar{\epsilon}+\Theta\nearParameter')\) and \(\| y_l - \varpi^{t_4}\| \leq 2 \Gamma(\bar{\epsilon}+\Theta\nearParameter').
\)
Furthermore, from the Lipschitz property of potential function (see Lemma \ref{lem: LipschitzNearPF}), it holds that 
\begin{align*}
 \phi(\varpi^{t_2}) - \underline{\phi} \leq   \phi(\varpi^{t_1}) - \underline{\phi}   \leq L  \| y_k - \varpi^{t_1}\| \leq 2L\Gamma(\bar{\epsilon}+\Theta\bar \nearParameter) , \\ 
 \bar{\phi} - \phi(\varpi^{t_3}) \leq    \bar{\phi} - \phi(\varpi^{t_4}) \leq L  \| y_l - \varpi^{t_4} \| \leq 2L\Gamma(\bar{\epsilon}+\Theta\bar \nearParameter). 
\end{align*}
Combining the two inequalities and using \eqref{eq: PhiT3MinusT2}, we obtain 
$$\bar{\phi} - \underline{\phi} 
  \leq 4L\Gamma(\bar{\epsilon}+\Theta\bar \nearParameter)- (\zeta - \Theta \bar{\nearParameter}) \frac{\eta d^\ast}{4ND\sqrt S}.$$
The choice of \(\bar\kappa\) and \(\bar\epsilon\) ensures that \(4L\Gamma(\bar\epsilon+\Theta\bar\kappa)-  (\zeta - \Theta \bar{\nearParameter}) \frac{d^\ast\eta }{16NLS} < 0 \), which shows \eqref{eq: PhiDiffEq}.

To summarize, so far we have shown that  there exists a time \(T\) after which the trajectories will only visit the approximate equilibrium set that is close to one equilibrium. Let's say that the equilibrium is \(\pi^{\ast k}\). Next, we characterize the maximum distance the trajectory can travel around the equilibrium \(\pi^{\ast k}\). 

Fix \(\epsilon, \epsilon_1\) arbitrarily such that \(0\leq \epsilon_1 < \epsilon \leq \bar\epsilon\). 
Let \(\tau_1\) denote the time when the trajectory leaves the component of \(\textsf{NE}(\epsilon_1+\Theta\nearParameter')\) in neighborhood of \(\pi^{\ast k}\), \(\tau_2\) denote the time when the trajectory leaves \(\mc{B}(\pi^{\ast k}; \Gamma(\epsilon+\Theta\nearParameter'))\), \(\tau_3\) denote the time when the trajectory returns to this neighborhood again, and \(\tau_4\) denotes the time enters the component of \(\textsf{NE}({\epsilon_1+\Theta\nearParameter'})\) in neighborhood of \(\pi^{\ast k}\). 

Let \(r^\ast\) denote the maximum distance the trajectory goes outside of \(\mc{B}(\pi^{\ast k}; \Gamma(\epsilon+\Theta\nearParameter'))\). Since \(\|\dot{\varpi^\tau}\|\leq 2N\sqrt{|S|}\), it must hold that \(\tau_3-\tau_2 \geq \frac{r^\ast}{N\sqrt{|S|}}\). Furthermore, since \(\dot{\phi}\leq -\eta\epsilon/D\) during time interval \([\tau_2,\tau_3]\), it must hold that \(
    \phi(\tau_3) - \phi(\tau_2)\leq -\frac{r^\ast\eta\epsilon}{ND\sqrt{|S|}}. \)
Moreover, note that \(\phi(\tau_2)\leq \phi(\tau_1)\), and \(\phi(\tau_4)\leq \phi(\tau_3)\). This implies \(
    \phi(\tau_1) - \phi(\tau_4) \geq \phi(\tau_2) - \phi(\tau_3) \geq \frac{r^\ast\eta\epsilon}{ND\sqrt{|S|}}.\)
Furthermore, by Lipschitz continuity \(
   \phi(\tau_1) - \phi(\tau_4) \leq L\|\varpi^{\tau_1}-\varpi^{\tau_4}\| \leq L(\|\varpi^{\tau_1}-\pi^{\ast k}\| + \|\pi^{\ast k} - \varpi^{\tau_4}\|) \leq2L\Gamma(\epsilon_1+\Theta\nearParameter'). 
\)
Combining previous two inequalities we obtain that 
\begin{align*}
    \frac{r^\ast\eta\epsilon}{ND\sqrt{|S|}} \leq 2L\Gamma(\epsilon_1+\Theta\nearParameter') \implies r^\ast \leq \frac{2DLN\sqrt{|S|}\Gamma(\epsilon_1+\Theta\nearParameter')}{\eta\epsilon}. 
\end{align*}

The proof concludes by noting that the maximum distance the trajectories eventually goes away from \(\pi^{\ast k}\) is \(\Gamma(\epsilon+\Theta\nearParameter') + \frac{2DLN\sqrt{|S|}\Gamma(\epsilon_1+\Theta\nearParameter')}{\eta\epsilon}\). Since \(\epsilon, \epsilon_1\) are chosen arbitrarily, we conclude the proof of the theorem. 
\end{proof}

\section{Numerical Experiments}\label{sec: Numerics}

We demonstrate the performance of the proposed learning dynamics (Algorithm 1) in a perturbed Markov team game with a parallel link network of \(L\) links used by \(N\) travelers. At each stage \(k\), player \(i\) chooses a link \(a_i^k \in [L]\), and the network state is \(s = (s_\ell)_{\ell\in[L]}\), where \(s_\ell = 1\) means link \(\ell\) is unsafe. A link becomes unsafe with probability \(\nu_{1}\) if the number of users exceeds a threshold \(\thresh\), and \(\nu_{2}\) otherwise.

The utility of player \(i\) is a combination of individual and common rewards:
 \begin{align*}
     u_i(s,a)&= \underbrace{\rho_i\sum_{\ell=1}^{L}\mathbb{I}(a_i=\ell)\big(b_{\ell} - (1 + s_{\ell})m_{\ell}\sum_{j\in\playerSet}\mathbb{I}(a_{j} = \ell)\big)}_{\text{Individual Reward}}
     \\&+\underbrace{\sum_{i\in N}\sum_{\ell=1}^{L}\mathbb{I}(a_i=\ell)\big(b_{\ell} - (1 + s_{\ell})m_{\ell}\sum_{j\in\playerSet}\mathbb{I}(a_{j} = \ell)\big)}_{\text{Common Reward}}.
 \end{align*}
 Here, \(b_\ell\) is the fixed utility of using link \(\ell \in [L]\), \(m_\ell\) is a link-dependent constant which weights the effect of congestion on the cost and  $(1 + s_{\ell})m_{\ell}\sum_{j\in\playerSet}\mathbb{I}(a_{j} = \ell)$ represents the cost of using $\ell$ given the total number of users on $\ell$ and the network state. The utility depends on two terms: one is an individual reward function and another is the common reward term. The parameter \(\rho_i\) characterizes how much agent \(i\) weighs individual reward over the common interest.

Although the game is not a Markov potential game, the expected long-run common reward acts as a near-potential function with closeness parameter depending on \(\rho = \max_{i \in I}\rho_i\).

We simulate the system with \(N=4\), \(L=2\), \(\thresh=2\), \(\nu_1=0.8\), \(\nu_2=0.2\), \(m_1=2\), \(b_1=9\), \(m_2=4\), and \(b_2=16\), for \(5000\) stages. The step sizes are \(\alpha_i(n) = 1/n^{0.5}\) and \(\beta_i(n) = 1/n^{0.9}\). We set \(\rho_i = (i/N)\cdot K\), with \(K \in \{10,100,500,1000\}\). Initial link states are random. 

For the exploration parameter \(\theta=0.05\) and the perturbation parameter \(\rho_i\) with \(K=10\), policies converge close to Nash equilibrium (Figure \ref{fig:ind_decentralized}). Interestingly, we observe that the Nash gap goes close to zero even though \(K\neq 0\) indicating that the parameter \(\kappa\) for this game is a very small number. Furthermore, we see that the Nash approximation gaps increase with \(\theta\) and \(K\) (Figure \ref{fig:enter-label}), which validates our theoretical convergence guarantees.

\begin{figure*}[!ht]
        \centering
    \begin{minipage}[l]{\textwidth}
        \centering
    \begin{subfigure}{0.32\textwidth}
    \centering
\includegraphics[width=\textwidth]{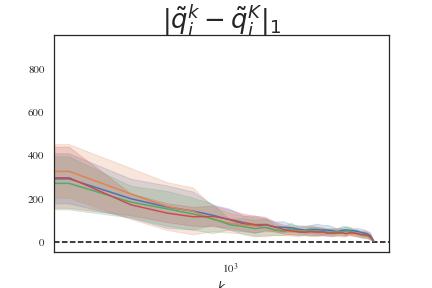}
    \caption{}
    \label{fig: F1}
    \end{subfigure}
    \begin{subfigure}{0.32\textwidth}
    \centering 
\includegraphics[width=\textwidth]{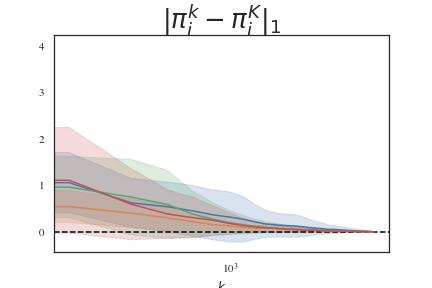}
    \caption{}
    \label{fig: F2}
    \end{subfigure}
    \begin{subfigure}{0.32\textwidth}
    \centering
\includegraphics[width=\textwidth]{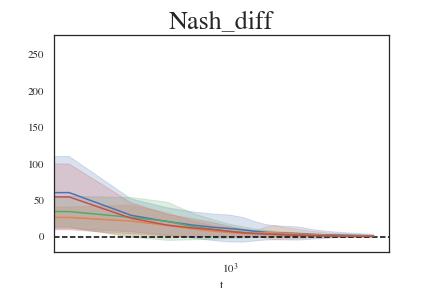}
    \caption{}
    \label{fig: F3}
    \end{subfigure}
     \end{minipage}
    \caption{Convergence of q-estimate, policies, and Nash error after \(10^5\) steps of Algorithm 1. In each of the figures the four curves correspond to four players. 
    Each curve represents the mean value of the quantity over \(5\) trials, and we give error margins of \(\pm 1\) standard deviation.}
    \label{fig:ind_decentralized}
\end{figure*}

\begin{figure}[h!]
    \centering
    \includegraphics[width=\linewidth]{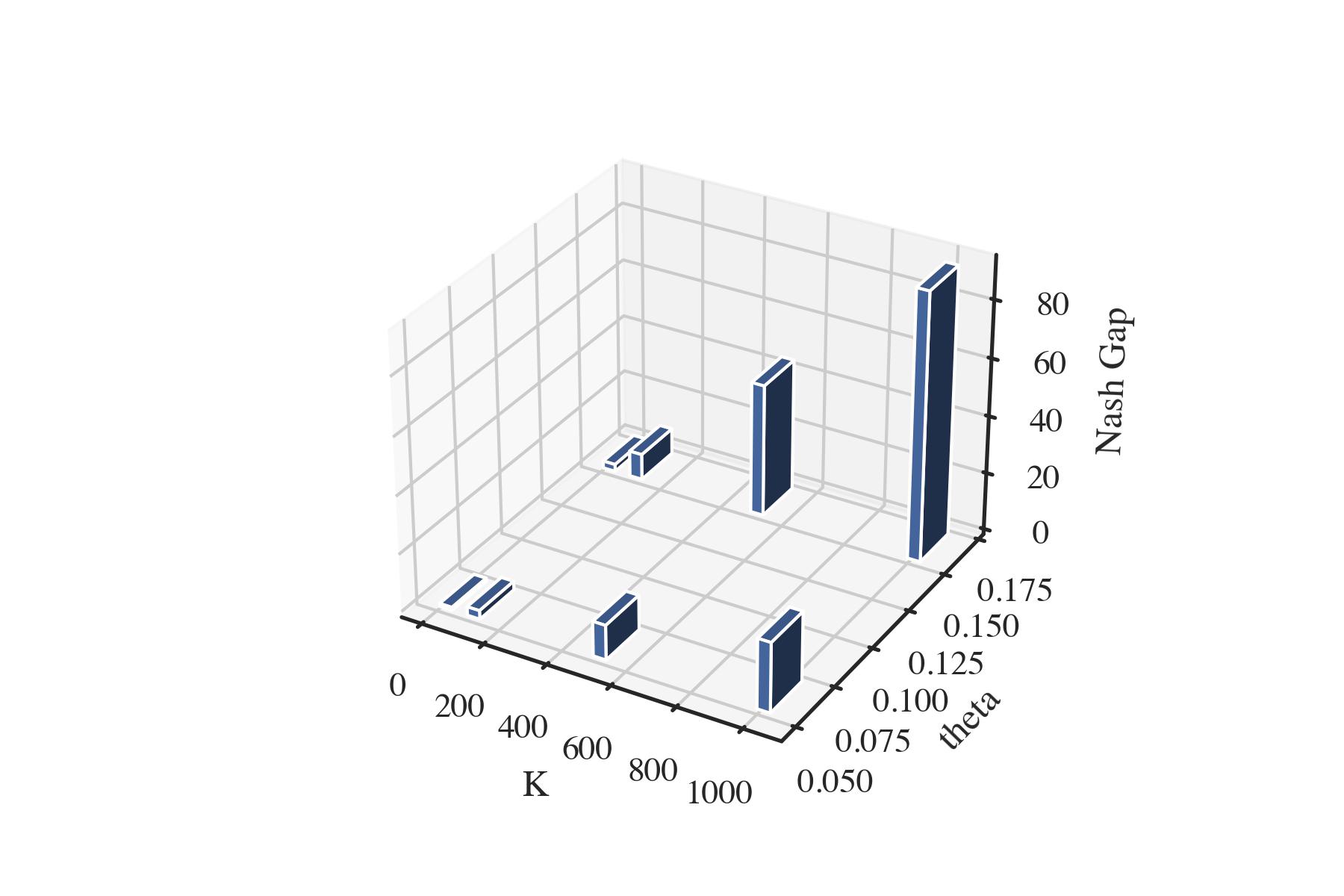}
    \caption{Variation of Nash gap with change in the exploration rate \(\theta\) and the reward perturbation \(K\). Increasing both of these parameters increases the Nash gap, which is defined to be \(\max_{i\in \playerSet}\max_{s\in S}|V_i(s,\pi^{t_{\max}}) - \max_{\pi_i\in \Pi_i}V_i(s,\pi_i,\pi_{-i}^{t_{\max}}) |\), where \(t_{\max}\) is the number of iterations of Algorithm 1.}
    \label{fig:enter-label}
\end{figure}
\section{Conclusion}
We analyze the convergence of a decentralized actor-critic algorithm in general Markov games by constructing a Markov near-potential function (MNPF) that approximates changes in value functions due to individual policy updates, effectively serving as an approximate Lyapunov function. {The MNPF framework offers potential for designing new algorithms in Markov games beyond the decentralized actor-critic analysis.}

\bibliography{refs}
\bibliographystyle{ieeetr}
\newpage 
\appendix 
\section{Auxiliary Results}
We present a technical result used in proof of Theorem \ref{theorem:independent}. 
\begin{lemma}\label{lem: GradDiff}
    For any Markov game \(G\) and an associated MNPF \(\Phi\) with closeness parameter \(\nearParameter\), it holds that 
    \begin{align*}
     | v_i^\top \frac{\partial \Phi(\mu,\pi)}{\partial \pi_i} -  v_i^\top \frac{\partial V_i(\mu,\pi)}{\partial \pi_i} | \leq \nearParameter\|v_i\|_2, \quad \forall i\in I, v_i\in \mathbb{R}^{|S||A_i|}. 
 \end{align*}
\end{lemma}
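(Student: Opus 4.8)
The plan is to reduce the gradient estimate to a finite‑difference (Lipschitz) estimate that comes straight out of Definition \ref{def: eps MPG new}. First I would average the defining inequality \eqref{eqn: def eps MPG new} over the initial distribution $\mu$. Since $\Phi(\mu,\pi)=\sum_{s}\mu(s)\Phi(s,\pi)$ and $V_i(\mu,\pi)=\sum_{s}\mu(s)V_i(s,\pi)$, multiplying \eqref{eqn: def eps MPG new} by $\mu(s)\ge 0$, summing over $s\in S$, applying the triangle inequality, and using $\sum_{s}\mu(s)=1$ gives, for all $\pi_i,\pi_i'\in\Pi_i$ and $\pi_{-i}\in\Pi_{-i}$,
\[
\bigl|(\Phi(\mu,\pi_i',\pi_{-i})-\Phi(\mu,\pi_i,\pi_{-i}))-(V_i(\mu,\pi_i',\pi_{-i})-V_i(\mu,\pi_i,\pi_{-i}))\bigr|\le\nearParameter\,\|\pi_i'-\pi_i\|_2 .
\]
Equivalently, the map $h_{\pi_{-i}}(\pi_i):=\Phi(\mu,\pi_i,\pi_{-i})-V_i(\mu,\pi_i,\pi_{-i})$ is $\nearParameter$-Lipschitz on $\Pi_i$ in the Euclidean norm.

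Next I would pass from this Lipschitz bound to the directional‑derivative bound. Fix $\pi$ and a direction $v_i$. The quantity $v_i^\top\tfrac{\partial\Phi(\mu,\pi)}{\partial\pi_i}-v_i^\top\tfrac{\partial V_i(\mu,\pi)}{\partial\pi_i}$ equals the directional derivative of $h_{\pi_{-i}}$ at $\pi_i$ along $v_i$, i.e.\ $\lim_{\varepsilon\downarrow 0}\varepsilon^{-1}\bigl(h_{\pi_{-i}}(\pi_i+\varepsilon v_i)-h_{\pi_{-i}}(\pi_i)\bigr)$. For $\varepsilon>0$ small enough that $\pi_i+\varepsilon v_i$ stays a feasible policy, the Lipschitz estimate above gives $|h_{\pi_{-i}}(\pi_i+\varepsilon v_i)-h_{\pi_{-i}}(\pi_i)|\le\nearParameter\varepsilon\|v_i\|_2$; dividing by $\varepsilon$ and letting $\varepsilon\downarrow 0$ yields $|v_i^\top\tfrac{\partial\Phi(\mu,\pi)}{\partial\pi_i}-v_i^\top\tfrac{\partial V_i(\mu,\pi)}{\partial\pi_i}|\le\nearParameter\|v_i\|_2$, which is the claim (the $\alpha$ in the statement being the closeness parameter $\nearParameter$). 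A slightly slicker way to write the same thing: a differentiable function that is $\nearParameter$-Lipschitz with respect to $\|\cdot\|_2$ has gradient of Euclidean norm $\le\nearParameter$ everywhere, so $\bigl\|\tfrac{\partial\Phi(\mu,\pi)}{\partial\pi_i}-\tfrac{\partial V_i(\mu,\pi)}{\partial\pi_i}\bigr\|_2\le\nearParameter$, and the lemma follows by Cauchy–Schwarz.

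The only point that needs care — and the main obstacle, though a mild one — is that $\Pi_i=\Delta(A_i)^{|S|}$ is a product of simplices, not an open set, so "gradient'' and "directional derivative'' must be read relative to the affine hull of $\Pi_i$: the finite‑difference bound only controls slopes along simplex‑tangent perturbations $v_i$ (those with $\mathbbm{1}_{A_i}^\top v_i(\cdot\mid s)=0$ for every $s$), which is exactly the regime in which this lemma is invoked in the proof of Theorem \ref{theorem:independent}, where $v_i$ is a difference of two probability vectors $\widetilde{\mathrm{br}}_i-\varpi_i$. For such $v_i$ the feasibility of $\pi_i+\varepsilon v_i$ holds for $\pi_i$ in the relative interior of $\Pi_i$ and extends to the boundary by continuity of the gradients, so no further difficulty arises; one decomposes a general $v_i$ into its $\Pi_i$-tangent part (handled above) and a normal part (absorbed into $\|v_i\|_2$ and not affecting the value of $h_{\pi_{-i}}$ on $\Pi_i$) if the statement is to be read for all $v_i\in\mathbb{R}^{|S||A_i|}$.
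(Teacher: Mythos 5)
Your proposal is correct and follows essentially the same route as the paper's proof: apply the MNPF inequality \eqref{eqn: def eps MPG new} to the pair $(\pi_i,\pi_i+hv_i)$, divide by $h$, and pass to the limit to bound the directional derivative. You additionally spell out the averaging over $\mu$ and the caveat that $\pi_i+hv_i$ must remain in $\Pi_i$ (so the bound is really about simplex-tangent directions), details the paper's proof leaves implicit.
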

\begin{proof}
For any \(v_i\in \mathbb{R}^{|S|\cdot|A_i|}\) it holds that 
\begin{equation}\label{eq: DirDer}
 \begin{aligned}
     v_i^\top \frac{\partial \Phi(\mu,\pi)}{\partial \pi_i} &= \lim_{h\rightarrow 0} \frac{\Phi(\mu,\pi_i+hv_i,\pi_{-i})-\Phi(\mu,\pi_i,\pi_{-i})}{h}, \\ 
     v_i^\top \frac{\partial V_i(\mu,\pi)}{\partial \pi_i} &= \lim_{h\rightarrow 0} \frac{V_i(\mu,\pi_i+hv_i,\pi_{-i})-V_i(\mu,\pi_i,\pi_{-i})}{h}.
 \end{aligned}
 \end{equation}
Additionally, using the definition of MNPF, we obtain
 \begin{equation}\label{eq: DiffVal}
 \begin{aligned}
    V_i(\mu,\pi_i+hv_i,\pi_{-i})-V_i(\mu,\pi_i,\pi_{-i}) - \nearParameter h\|v_i\|_2\\ \leq \Phi(\mu,\pi_i+hv_i,\pi_{-i})-\Phi(\mu,\pi_i,\pi_{-i}) \\ \leq V_i(\mu,\pi_i+hv_i,\pi_{-i})-V_i(\mu,\pi_i,\pi_{-i}) + \nearParameter h\|v_i\|_2.
 \end{aligned}
 \end{equation}
Dividing everything by \(h\) in \eqref{eq: DiffVal} and using \eqref{eq: DirDer}, we obtain 
 \begin{align*}
     | v_i^\top \frac{\partial \Phi(\mu,\pi)}{\partial \pi_i} -  v_i^\top \frac{\partial V_i(\mu,\pi)}{\partial \pi_i} | \leq \nearParameter\|v_i\|_2, \quad \forall i\in I, v_i\in \mathbb{R}^{|S|\cdot|A_i|}.
 \end{align*}
This concludes the proof. 
\end{proof}

We now present a technical result used in proof of Theorem \ref{thm: ConvergenceFiniteEquilibrium}.

\begin{lemma}\label{lem: LipschitzNearPF}
    For any Markov game \(G\), an associated MNPF \(\Phi\) with closeness parameter \(\nearParameter\), and \(\mu\in \Delta(S),\) the mapping \(\pi\mapsto\Phi(\mu,\pi)\) is \(L-\)Lipschitz continuous, with
    \begin{align*}
        L= \left(\nearParameter\sqrt{N}+\frac{r_{\max}}{(1-\gamma)^2}\sqrt{N|S||A|}\right).
    \end{align*}
\end{lemma}
\begin{proof}
To show the desired Lipschitz bound, it is sufficient to show that 
\begin{align*}
    |\Phi(\mu,\pi)-\Phi(\mu,\pi')| \leq L \|\pi-\pi'\|, \quad \forall \ \pi, \pi'\in \Pi. 
\end{align*}
For the remaining proof, consider two arbitrary policies \(\pi = (\pi_1, \pi_2,...,\pi_N),\) and \(\pi' = (\pi_1', \pi_2', ..., \pi_N')\). For any \(i\in \{0,1,..,N\}\), define a joint policy 
\[\pi^{(i)} = (\pi_1, \pi_2, \cdots, \pi_{i}, \pi_{i+1}', \cdots, \pi'_N).\] Naturally, \(\pi^{(0)}=  \pi'\) and \(\pi^{(N)} = \pi\). 

Note that 
\begin{align*}
    |\Phi(\mu,\pi)-\Phi(\mu,\pi')| = |\Phi(\mu,\pi^{(N)})-\Phi(\mu,\pi^{(0)})| \leq  \sum_{i=0}^{N-1}|\Phi(\mu,\pi^{(i+1)})-\Phi(\mu,\pi^{(i)})|. 
\end{align*}

Since \(\pi^{(i+1)}\) and \(\pi^{(i)}\) only differ in the policy of player \((i+1)\), using Definition \ref{def: eps MPG new}, we obtain 
\begin{align}\label{eq: Eq1}
    &|\Phi(s,\pi)-\Phi(s,\pi')| \leq \kappa\sum_{i=0}^{N-1}\|\pi^{(i+1)}-\pi^{(i)}\|+\sum_{i=0}^{N-1}|V_{i+1}(s,\pi^{(i+1)})-V_{i+1}(s,\pi^{(i)})| \notag   \\ 
    &\leq \kappa\sum_{i=0}^{N-1}\|\pi_{i+1}-\pi_{i+1}'\|\notag \\&\quad +\sum_{i=0}^{N-1}\Bigg|\sum_{s'\in S, a_{i+1}'\in A_{i+1}}\frac{\partial V_{i+1}(s,\tilde \pi)}{\partial \tilde\pi_{i+1}(s',a_{i+1}')} \bigg|_{\tilde{\pi}=\zeta^{(i)}}(\pi_{i+1}(s',a_{i+1}')-\pi_{i+1}'(s',a_{i+1}'))\Bigg|,
\end{align}
where \(\zeta^{(i)}= (\pi_1, \pi_2, \cdots, \pi_i,\xi_{i+1}',\pi'_{i+2}...,\pi_N')\) and \(\xi_{i+1}' = \pi_{i+1} + t(\pi_{i+1}-\pi_{i+1}')\) for some \(t\in [0,1]\).

Using \eqref{eq: Eq1}, along with Lemma \ref{lem: V-gradient}, we obtain 
\begin{align*}
    &|\Phi(\mu,\pi)-\Phi(\mu,\pi')|\\
    &\leq \kappa\sum_{i=0}^{N-1}\|\pi_{i+1}-\pi_{i+1}'\| \\&\quad + \frac{1}{(1-\gamma)}\sum_{i=0}^{N-1}\Bigg|\sum_{s'\in S, a_{i+1}'\in A_{i+1}}d^{\zeta^{(i)}}_{\mu}(s')Q_{i+1}(s',a_{i+1}',\zeta^{(i)}) (\pi_{i+1}(s',a_{i+1}')-\pi_{i+1}'(s',a_{i+1}'))\Bigg|,
    \\
    &\leq \kappa\sum_{i=0}^{N-1}\|\pi_{i+1}-\pi_{i+1}'\| \\&\quad + \frac{1}{(1-\gamma)}\sum_{i=0}^{N-1}\sum_{s'\in S, a_{i+1}'\in A_{i+1}}\Bigg|d^{\zeta^{(i)}}_{\mu}(s')Q_{i+1}(s',a_{i+1}',\zeta^{(i)}) (\pi_{i+1}(s',a_{i+1}')-\pi_{i+1}'(s',a_{i+1}'))\Bigg|.
\end{align*}

Additionally, using the fact that \(|d_{\mu}^{(\zeta^{(i)})}(s')|\in [0,1]\), we obtain 
\begin{align*}
     &|\Phi(\mu,\pi)-\Phi(\mu,\pi')|
      \leq \kappa\sum_{i=0}^{N-1}\|\pi_{i+1}-\pi_{i+1}'\| \\&\quad + \frac{1}{(1-\gamma)}\sum_{i=0}^{N-1}\sum_{s'\in S, a_{i+1}'\in A_{i+1}}\Bigg|Q_{i+1}(s',a_{i+1}',\zeta^{(i)}) (\pi_{i+1}(s',a_{i+1}')-\pi_{i+1}'(s',a_{i+1}'))\Bigg| \\ 
      &\leq \kappa\sum_{i=0}^{N-1}\|\pi_{i+1}-\pi_{i+1}'\| \\&\quad + \frac{1}{(1-\gamma)}\sum_{i=0}^{N-1}\max_{s',a_{i+1}',\zeta^{(i)}}Q_{i+1}(s',a_{i+1}',\zeta^{(i)}) \sum_{s'\in S, a_{i+1}'\in A_{i+1}}\Bigg|(\pi_{i+1}(s',a_{i+1}')-\pi_{i+1}'(s',a_{i+1}'))\Bigg|.
\end{align*}

From \eqref{eq: Q_i_function},  we note that
\begin{align*}
\max_{s',a_{i+1}',\zeta^{(i)}}|Q_{i+1}(s',a_{i+1}',\zeta^{(i)})|  \leq \frac{r_{\max}}{(1-\gamma)}. 
\end{align*}
where \(r_{\max} = \max_{i,s,a} r_i(s,a) \). Therefore, we obtain, 
\begin{align*}
     &|\Phi(\mu,\pi)-\Phi(\mu,\pi')|
      \\
      &\leq \kappa\sum_{i=0}^{N-1}\|\pi_{i+1}-\pi_{i+1}'\| \\&\quad + \frac{u_{\max}}{(1-\gamma)^2}\sum_{i=0}^{N-1} \sum_{s'\in S, a_{i+1}'\in A_{i+1}}\Bigg|(\pi_{i+1}(s',a_{i+1}')-\pi_{i+1}'(s',a_{i+1}'))\Bigg|.
\end{align*}
Finally, by using Cauchy-Schwarz inequality, we conclude that 
\begin{align*}
    &|\Phi(s,\pi)-\Phi(s,\pi')|
     \\ &\leq \left(\kappa \sqrt{N} + \frac{u_{\max}}{(1-\gamma)^2}\sqrt{|S||A|N}\right)\|\pi-\pi'\|. 
\end{align*}

\end{proof}
\begin{lemma}\label{lem: Gamma_Prop}
    For any \(\delta\geq 0\), the function \(\Gamma(\delta)\) (defined in \eqref{eq: f_def}) exists, is upper semi-continuous and weakly increasing.  
\end{lemma}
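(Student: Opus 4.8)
The plan is to express $\textsf{NE}(\delta)$ as a sublevel set of a single continuous function and then invoke standard compactness arguments. First, for each $i\in I$ define $g_i(\pi) := \max_{\pi_i'\in \Pi_i} V_i(\mu,\pi_i',\pi_{-i}) - V_i(\mu,\pi)$ and $g(\pi) := \max_{i\in I} g_i(\pi)$. Since $\gamma<1$, the value function $V_i(\mu,\cdot)$ is continuous on the compact set $\Pi$, so the map $\pi_{-i}\mapsto \max_{\pi_i'\in\Pi_i}V_i(\mu,\pi_i',\pi_{-i})$, being the maximum of a jointly continuous function over the fixed compact set $\Pi_i$, is continuous; hence $g$ is continuous on $\Pi$ and $g\geq 0$. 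By Definition \ref{def: NashEqEpsNashEq}, $\textsf{NE}(\delta) = \{\pi\in \Pi : g(\pi)\leq \delta\}$, which is closed, hence compact, and nonempty for every $\delta\geq 0$ because $\textsf{NE}(0)\subseteq \textsf{NE}(\delta)$ and $\textsf{NE}(0)\neq\varnothing$ by Assumption \ref{assm: FintelyManyEq}. Writing $h(\pi):=\min_{k\in[K]}\|\pi-\pi^{\ast k}\|$, which is continuous as the distance to a finite set, the maximum $\Gamma(\delta)=\max_{\pi\in\textsf{NE}(\delta)}h(\pi)$ is attained and finite; this establishes existence.

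Monotonicity is then immediate: if $\delta_1\leq\delta_2$ then $\{g\leq\delta_1\}\subseteq\{g\leq\delta_2\}$, i.e. $\textsf{NE}(\delta_1)\subseteq\textsf{NE}(\delta_2)$, so maximizing $h$ over the larger set can only increase the value, giving $\Gamma(\delta_1)\leq\Gamma(\delta_2)$.

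For upper semi-continuity I would use that $\Gamma$ is monotone, so any left limit is automatically $\leq\Gamma(\delta)$ and it only remains to bound the right limit. Take any sequence $\delta_n\downarrow\delta$ and pick $\pi_n\in\textsf{NE}(\delta_n)$ with $h(\pi_n)=\Gamma(\delta_n)$. All $\pi_n$ lie in the compact set $\textsf{NE}(\delta_1)$, so along a subsequence $\pi_{n_j}\to\bar\pi$; continuity of $g$ together with $g(\pi_{n_j})\leq\delta_{n_j}\to\delta$ gives $g(\bar\pi)\leq\delta$, i.e. $\bar\pi\in\textsf{NE}(\delta)$, while continuity of $h$ gives $h(\bar\pi)=\lim_j\Gamma(\delta_{n_j})$. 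Since $\Gamma$ is monotone, the full limit $\lim_{\delta'\downarrow\delta}\Gamma(\delta')$ exists and equals $\lim_j\Gamma(\delta_{n_j})=h(\bar\pi)\leq\Gamma(\delta)$. Combining with the left-limit bound yields $\limsup_{\delta'\to\delta}\Gamma(\delta')\leq\Gamma(\delta)$, which is the claimed upper semi-continuity.

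I expect the only substantive point to be the continuity of $g$ on $\Pi$ — equivalently, that the correspondence $\delta\mapsto\textsf{NE}(\delta)$ is upper hemicontinuous, which is exactly what forces the subsequential limit $\bar\pi$ back into $\textsf{NE}(\delta)$. This in turn reduces to continuity of $V_i(\mu,\cdot)$ on the compact set $\Pi$, which follows from $\gamma<1$ via the closed-form (Neumann-series) expression for discounted value functions; everything else is a routine compactness-and-continuity argument.
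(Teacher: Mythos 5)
Your proof is correct and follows essentially the same route as the paper: compactness of $\textsf{NE}(\delta)$ plus continuity of $\pi\mapsto\min_k\|\pi-\pi^{\ast k}\|$ for existence, nesting of the sets for monotonicity, and upper hemicontinuity of the correspondence $\delta\mapsto\textsf{NE}(\delta)$ for upper semi-continuity. The only difference is that the paper delegates the last step to Berge's maximum theorem, whereas you verify its hypothesis by hand via the continuity of the exploitability gap $g$ and a sequential compactness argument — a slightly more self-contained version of the same argument.
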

\begin{proof}
First, we show that \(\Gamma(\cdot)\) exists for every \(\delta\geq 0\) and is upper semicontinous. This follows directly from the fact that \(\textsf{NE}(\delta)\) is a non-empty and compact for any non-negative \(\delta\), and the mapping \(\pi\mapsto \min_{k\in [K]}\|\pi-\pi^{\ast k}\|\) is continuous. Furthermore, the upper semicontinuity follows directly from Berge's maximum theorem.  Finally, we show that \(\Gamma(\cdot)\) is a weakly increasing function. This is due to the fact that \(\textsf{NE}(\delta)\subseteq \textsf{NE}(\delta')\) for any \(\delta' > \delta\geq 0\).  
\end{proof}

\begin{lemma}[Lemma 3.8(a) in \cite{maheshwari2022independent}]\label{lem: V-gradient}
    For any \(\mu\in\Delta(S), s'\in S,\pi\in\Pi,i\in [N],a_i'\in A_i\),
\begin{align*}
    \frac{\partial {V}_i(\mu,\tilde{\pi})}{\partial\tilde{\pi}_i(s',a_i')}\bigg|_{\tilde{\pi}=\pi}&=\frac{1}{1-\gamma}d^{\pi}_{\mu}(s'){Q}_i({s}',a_i';\pi),
\end{align*}
where 
\begin{align}
   d^{\pi}_{\mu}(s') :=  (1-\gamma)\sum_{k=0}^{\infty}\gamma^k\mathsf{Pr}(s^k=s'|s^0\sim\mu) 
\end{align} is a probability distribution over the set of states. Here, \(s^0\sim\mu, s^{k+1}\sim P(\cdot|s^{k},a^{k})\) and \(a^k\sim\pi(s^k)\). 
\end{lemma}

\begin{lemma}[Lemma 3.8 (c) in \cite{maheshwari2022independent}]\label{lem: V_diff}
    For any \(i\in I, \pi_i\in \Pi_i, \pi_{-i} \in\Pi_{-i}\) it holds that 
    \begin{align*}
        \max_{s\in S }|V_i(s,\pi_i,\pi_{-i}) - V_i(s,\pi_i,\pi_{-i}^\theta)| \leq \frac{2\theta |I|}{(1-\gamma)^2}r_{\max} ,
    \end{align*}
    where \(r_{\max} := \max_{i,s,a}|r_i(s,a)|\), \(\pi_{-i}^\theta(s) := (1-\theta)\pi_{-i}(s) + \theta\pi^\circ\) and \(\pi^\circ := (1/|A_{-i}|)\mathbbm{1}_{A_{-i}}\). 
\end{lemma}

\begin{lemma}[Lemma 3.8 (d) in \cite{maheshwari2022independent}]\label{lem: Q_diff}
    For any \(i\in I, \pi_i\in \Pi_i, \pi_{-i}\in \Pi_{-i}\), it holds that 
    \begin{align*}
        \max_{s,a_i}|Q_i(s,a_i;\pi_i,\pi_{-i}) -Q_i(s,a_i;\pi_i,\pi_{-i}^\theta)  | \leq \frac{2\theta |I|}{(1-\gamma)^2}r_{\max}, 
    \end{align*}
    where \(r_{\max} := \max_{i,s,a}|r_i(s,a)|\), \(\pi_{-i}^\theta(s) := (1-\theta)\pi_{-i}(s) + \theta\pi^\circ\) and \(\pi^\circ := (1/|A_{-i}|)\mathbbm{1}_{A_{-i}}\). 
\end{lemma}

{
\begin{lemma}[\cite{maheshwari2022independent, ding2022independent}]\label{lem: Multi-agentPerformanceDifference}
For any policy \(\policy=(\policy_i,\policy_{-i}), \policy'=(\policy_i',\policy_{-i})\in \Pi\) and any \(\mu\in\Delta(\stateSet)\), 
\begin{align*}
&{\vFunc}_i(\mu,\policy)-{\vFunc}_i(\mu,\policy') = (1/(1-\discount))\cdot\sum_{s'}d^{\pi}_{\mu}(s') \advFunc_i(s',\policy_i;\policy'),
\end{align*}
where $\advFunc_i(s,a_i;\policy) \defas {Q}_i(s,a_i;\policy)-{V}_i(s,\policy)$.
\end{lemma}}

\end{document}